\renewcommand{\thesubfigure}{\thefigure.\arabic{subfigure}}
\renewcommand{\p@subfigure}{}
\renewcommand{\@thesubfigure}{\thesubfigure:\hskip\subfiglabelskip}
\def\square{\pst@object{square}}
\def\square@i(#1,#2)#3{{\use@par\solid@star\psframe[origin={#1,#2}](#3,#3)}}
\DeclareFontFamily{U}{tipa}{}
\DeclareFontShape{U}{tipa}{bx}{n}{<->tipabx10}{}
\newcommand{\arc@char}{{\usefont{U}{tipa}{bx}{n}\symbol{62}}}%
\newcommand{\arc}[1]{\mathpalette\arc@arc{#1}}
\newcommand{\arc@arc}[2]{%
  \sbox0{$\m@th#1#2$}%
  \vbox{
    \hbox{\resizebox{\wd0}{\height}{\arc@char}}
    \nointerlineskip
    \box0
  }%
}
\newcommand{\doublewedge}{\big@doubleop{\wedge}}
\newcommand{\big@doubleop}[1]{%
  \DOTSB\mathop{\mathpalette\big@doubleop@aux{#1}}\slimits@
}
\newcommand\big@doubleop@aux[2]{%
  \sbox\z@{$\m@th#1#2$}%
  \makebox[1.35\wd\z@][s]{$\m@th#1#2\hss#2$}%
}
\newcommand{\abs}[1]{\left|#1\right|}     
\newcommand{\dnear}{\delta_{\Phi}} 
\newcommand{\rnear}{\widetilde{\delta}_{\Phi}} 
\theoremstyle{plain}
\newtheorem{axiom}{Axiom}
\newtheorem{theorem}{Theorem}
\newtheorem{lemma}{Lemma}
\newtheorem{remark}{Remark}
\newtheorem{definition}{Definition}
\newtheorem{example}{Example}
\begin{document}

\title{Near Stein-Weiss Finite Vector Field Groups in\\ Characteristic Nearness Approximation Spaces\\
in the polar Complex Plane}

\author[J.F. Peters]{James F. Peters}
\address{
Department of Electrical and Computer Engineering,
University of Manitoba, WPG, Manitoba, R3T 5V6, Canada and
Department of Mathematics, Faculty of Arts and Sciences, Ad\.{i}yaman University, 02040 Ad\.{i}yaman, Turkey
}
\email{james.peters3@umanitoba.ca}

\author[M.A. \"{O}t\"{u}rk]{Mehmet A. \"{O}zt\"{u}rk}
\address{
Department of Mathematics, Faculty of Arts and Sciences, Ad\.{i}yaman University, 02040 Ad\.{i}yaman, Turkey
}
\email{mehaliozturk@gmail.com}

\subjclass[2020]{54E17 (Nearness spaces),
43A40 (Character groups and dual objects), 54E05 (Proximity structures and generalizations)}

\date{}
  
\begin{abstract}
This paper introduces results for characteristically near Stein-Weiss groups inherent in vector fields in the complex plane $\mathbb{C}$.  Near groups are discerned in the context of characteristic nearness approximation spaces (cNASs). A characteristic of a Stein-Weiss group is a holomorphic mapping $\varphi:\Omega\in 2^{\mathbb{C}}\to\mathbb{C}$ defined by $\varphi(t)=e^{jt}$, which defines a vector field in the complex plane.   All characteristic vectors emanate from the same fixed point in $\mathbb{C}$, namely, 0.  
\end{abstract}

\keywords{Characteristic, Complex Plane, Eigenvalue, Group, Holomorphic, Proximity, Nearness, Stability, Vector Field}

\maketitle
\tableofcontents

\section{Introduction}
Characteristically near groups arise from variants of the characteristic function $\varphi(t)= e^{jt}$ from~\cite[\S IV.5, p. 173]{Stein1971}, which defines the characteristic of a Stein-Weiss multiplicative abelian group $(\left\{\varphi(z_0\in\mathbb{C})\right\},\cdot)$, i.e., 
\begin{center} 
\boxed{\boldsymbol{
\varphi:\Omega\in 2^{\mathbb{C}}\to\mathbb{C}\ \mbox{defined by}\ \varphi(z_0) = e^{2\pi iz_0}: (G,\cdot)=(\left\{\varphi(z_0)\right\},\cdot)
,t\in \mathbb{R},
}}
\end{center}
Characteristic functions were introduced by W.R. Hamilton in 1837 in a study of light rays~\cite{Hamilton1837} and tailored for various other settings by J.L. Synge in 1931~\cite{Synge1931}.  
The Stein-Weiss group $(G,\cdot)$ (inspired L. Euler by~\cite[\S VIII, p. 112]{Euler1748}) is a natural biproduct 
of a vector field in the complex plane that reflects the behaviour of a dynamical system.  The characteristic vector \boxed{\vec{\Phi}(G) = (\varphi_1(G),\varphi_2(G),\dots,\varphi_n(G))} contains characteristics $\varphi_i(t\in\Omega)\in [-1,1]\in\mathbb{C}$ of $\vec{\Phi}(G)$ that reflect the structure $G$, e.g.,

\vspace*{0.2cm}

\begin{example}{\bf (Sample Group Characteristics).}\\
\begin{align*}
\varphi_1(G) &=k\ \in \left\{0\mbox{(not finite)},1\mbox{(finite)}\right\}.\\
\varphi_2(G) &=k\ \in \left\{0\mbox{(not abelian)},1\mbox{(abelian)}\right\}.\\
\varphi_3(G) &= o(G)\in \mathbb{Z}^{0+}\ \mbox{(order of $G$)}.\\
\varphi_4(G) &=\ \abs{\vec{G}}\in 2^{\mathbb{C}}\ \mbox{(size of finite vector field in complex plane)}.\\
\varphi_5(G) &= \ \abs{\lambda_{max}}\in \left\{\lambda_{\vec{G}}\right\}\ \mbox{maximum eigenvalue in}\ \left\{\lambda_{\vec{G}}\right\}.\mbox{\qquad \textcolor{blue}{$\blacksquare$}}. 
\end{align*}
\end{example}

\vspace*{0.2cm}
  
Near groups are discerned in the context of a characteristic nearness approximation space (cNAS), which is introduced in this paper.  Unlike the earlier
nearness approximation space (NAS~\cite{Peters1}) and weak nearness approximation space (wNAS) introduced in~\cite{wNAS2019}, the backbone of a cNAS is characteristic function $\varphi(t)$ accompanied by a characteristic distance $d^{\boldsymbol{\Phi}}(A,B)$ measurement between object $A$ and $B$ (see Def.~\ref{def:dnear0}), introduced in~\cite{peters2025}.
It is the characteristic distance that provides a formal basis for characteristic equivalences and partition of an object space into equivalence classes (see Def.~\ref{def:cNAS}).

\begin{figure}[!ht]
	\centering
\includegraphics[width=74mm]{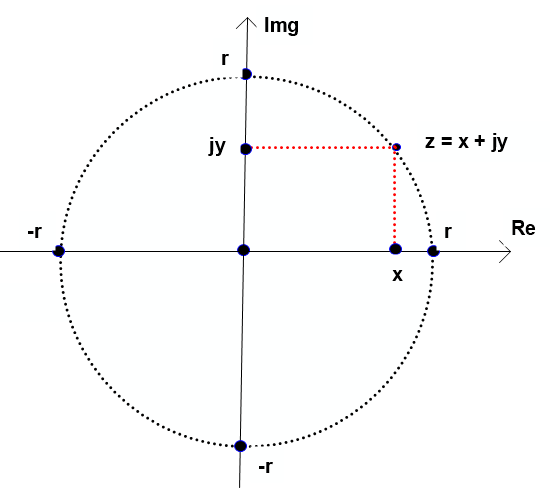}
\caption{Complex Plane.}
	\label{fig:C}
\end{figure}
 
\begin{table}[h!]\label{table:symbols}
\begin{center}
\caption{Principal Symbols Used in this Paper}
\begin{tabular}{|c|c|}
\hline
Symbol & Meaning\\ 
\hline\hline
$\boldsymbol{\Box}$ & End of Proof.\\
\hline
$\mbox{\textcolor{blue}{$\blacksquare$}} $ & End of block.\\
\hline
$\mathbb{C}$ & Complex plane.\\
\hline
$j$ & $j^2 = -1,\ \mbox{also written}$\ $i: i^2 = -1$.\\
\hline
$\vec{0}$ & center of unit circle in polar $\mathbb{C}$.\\
\hline 
$z$ & = $a + jb=e^{j\theta}\in\mathbb{C}:a,b\in\mathbb{R}$ (see Fig.~\ref{fig:C}).\\
\hline
$2^X$ & $\mbox{collection of subsets in $X$}$.\\
\hline
$A\ \dnear\ B$ & $A$ characteristically near $B$.\\
\hline
$\varphi(a\in A)\in \mathbb{C}$ & Characteristic of $a\in A$.\\
\hline
$\Phi(A)$ & =$\left\{\varphi(a_i):i\geq 1,a_i\in A \right\}\in2^{\mathbb{C}}$.\\
\hline
$d^{\Phi}(A,B)$ & Infimum of Characteristic Distances between $A$ and $B$.\\
\hline
cNAS &  characteristic Nearness Approximation Space.\\
\hline
\end{tabular}
\end{center}
\end{table}


\section{Preliminaries}
\noindent Detected affinities between vector fields for stable systems result from determining the infimum of the characteristic distances between pairs of system characteristics.\\

\vspace*{0.2cm}  

\begin{definition}\label{def:vector}{\bf (Vector).}\\
A {\bf vector} $v$ (denoted by $\vec{v}$) is a quantity that has magnitude and direction in the complex plane $\mathbb{C}$.
\qquad \textcolor{blue}{$\blacksquare$}
\end{definition}

\begin{figure}[!ht]
	\centering
\includegraphics[width=125mm]{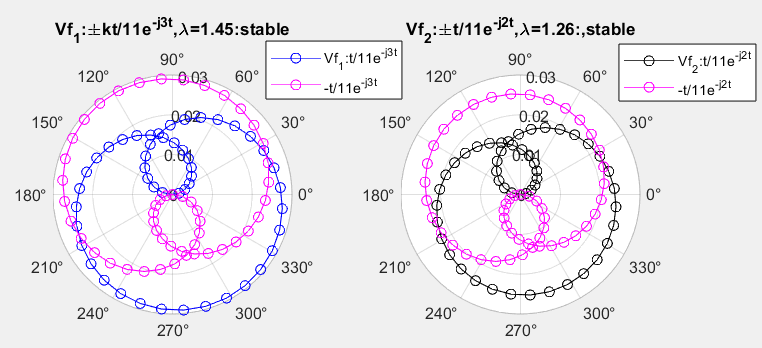}
  \caption{Characteristically Near Vector Fields in Polar Complex Plane:
	\boxed{\mbox{[left]$\vec{V}f_1$},stable},\boxed{\mbox{[right]$\vec{V}f_2$},stable}}
	\label{fig:nearVfs}
\end{figure} 

\begin{definition}\label{def:Vf0}{\bf (Vector Field in the Complex Plane).}\\
Let $U = \left\{p\in \mathbb{C}\right\}$ be a bounded region in the complex plane containing points $p(x,jy)\in U$. A {\bf vector field} is a mapping $F:U\to 2^{\mathbb{C}}$ defined by\\ 

\vspace*{0.2cm}

\begin{center}
\boxed{\boldsymbol{
F(p(x,jy)) = \left\{\vec{v}\right\}\in 2^{\mathbb{C}}\ 
\mbox{\bf (denoted by\ $\vec{V}f$)}
\mbox{\qquad \textcolor{blue}{$\blacksquare$}}
}}
\end{center}
\end{definition}

\vspace*{0.2cm}

\begin{mdframed}[backgroundcolor=green!15]
\begin{axiom}\label{axiom: PhysSysRepresentation}{\bf (Physical System Representation Axiom).}\\
Every physical system has a represention as a vector field in the complex plane.
\qquad \textcolor{blue}{$\blacksquare$}
\end{axiom}
\end{mdframed}\mbox{}\\

\vspace*{0.2cm}

\begin{remark}{\bf (Practical Outcomes of Axiom~\ref{axiom: PhysSysRepresentation}).}\\
Axiom~\ref{axiom: PhysSysRepresentation} has a number of practical outcomes in the design and evaluation of systems.  Let $\vec{V}f(\mathcal{O})$ be a vector field representation of a system $\mathcal{O}$ (collection of objects in a system).  Then
\begin{compactenum}
\item Every object $x\in \mathcal{O}$ is represented by a vector 
\boxed{\boldsymbol{\vec{x}\in\vec{V}f(\mathcal{O})\in \mathcal{C}}}.
\item Every object $x\in \mathcal{O}$ is measurable in terms of a radius $r_{\vec{V}f(\mathcal{O})}$:\\

\vspace*{0.2cm}
 
\begin{center}
\boxed{\boldsymbol{
r_{\vec{V}f(\mathcal{O})} =\ \mbox{distance of the tip of}\ \vec{v}\ \ \mbox{from the origin}\ \vec{0}\in \vec{V}f(\mathcal{O}).
}}
\end{center}

\vspace*{0.5cm}

and angle $\theta_{\vec{v}f(\mathcal{O})} $:\\

\vspace*{0.2cm}

\begin{center}
\boxed{\boldsymbol{
\theta_{\vec{V}f(\mathcal{O})} =\ \mbox{direction of}\ \vec{v}\in \vec{V}f(\mathcal{O}).
}}\ \mbox{\qquad \textcolor{blue}{$\blacksquare$}}
\end{center}

\vspace*{0.2cm}

\end{compactenum} 
\end{remark}

\vspace*{0.2cm}

\begin{remark}{\bf (Polar Form of a Complex Number).}\\ 
Recall that a complex number $z$ in polar form (introduced by Euler~\cite[\S VIII, p. 112]{Euler1748}) is written \boxed{\boldsymbol{z=re^{j\theta}=cos\theta+jsin\theta}}. For the geometry of the polar complex plane plots of $re^{j\theta}$ like those in Fig.~\ref{fig:polarZ} and Fig.~\ref{fig:nearVfs}, see T. Needham~\cite{Needham1997}.
\qquad \textcolor{blue}{$\blacksquare$}
\end{remark}

\vspace*{0.2cm}

\begin{definition}\label{def:Krantz}{\bf (Krantz Vector Field Stability Condition~\cite{Krantz2010}).}\\
A vector field $\vec{V}f$ in the complex plane is stable, provided each $\vec{V}f$ lies either within or on the boundary of the unit circle centered at $\boldsymbol{0}$ in $\mathbb{C}$. \qquad \textcolor{blue}{$\blacksquare$}
\end{definition}

\vspace*{0.2cm}
\begin{example}
Two examples of stable vector fields in polar form are given in Fig.~\ref{fig:nearVfs}. In both cases, observe that all of the vectors in the fields in Fig.~\ref{fig:nearVfs} lie within the boundary of the unit circle in the complex plane.  Hence, from Def.~\ref{def:Krantz}, both vector fields are stable. 
\qquad \textcolor{blue}{$\blacksquare$}
\end{example}

\vspace*{0.2cm}


\begin{example}\label{ex:2vectorFields}{\bf (Pair of Vector Fields).}\\
Two sample vector fields \boxed{\vec{V}f_1,\vec{V}f_2} are shown in Fig.~\ref{fig:nearVfs}, namely,
\begin{align*}
\vec{V}f_1:\mathbb{R} &\to \mathbb{C},\ \mbox{defined by}\\
   & \vec{V}f_1(t) = \pm\frac{kt}{11}e^{-j3t}\ \mbox{\bf (lefthand Vector field in Fig.~\ref{fig:nearVfs})}.\\
\vec{V}f_2:\mathbb{R} &\to \mathbb{C},\ \mbox{defined by}\\
   & \vec{V}f_2(t) = \pm\frac{t}{11}e^{-j2t}\ \mbox{\bf (righthand Vector field in Fig.~\ref{fig:nearVfs})\qquad \textcolor{blue}{$\blacksquare$}}.\\ 					
\end{align*}
\end{example}

\begin{lemma}\label{lemma:charGroup}{\bf (Stein-Weiss Multiplicative Vector Field Group~\cite[\S IV.5, p. 173]{Stein1971}).}\\
The vector field $\vec{V}f(t) = e^{\pm jt}$ with multiplication $\cdot$ on its exponentials is an abelian Stein-Weiss group
\boxed{G(\left\{e^{\pm jt}\right\},\cdot)}.
\end{lemma}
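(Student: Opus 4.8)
The plan is to verify directly that $(\{e^{\pm jt}: t\in\mathbb{R}\},\cdot)$ satisfies the group axioms together with commutativity, and then to identify this structure with the Stein--Weiss character group described in~\cite[\S IV.5, p.~173]{Stein1971}. First I would fix notation: write $G = \{e^{jt}: t\in\mathbb{R}\}$ and observe that $e^{-jt}=e^{j(-t)}$ already lies in $G$, so the $\pm$ in the statement is cosmetic; by the polar form recalled above ($e^{j\theta}=\cos\theta+j\sin\theta$), $G$ is exactly the unit circle $\{z\in\mathbb{C}:|z|=1\}$, and each element $e^{jt}$ is the tip of a unit vector emanating from $\vec{0}$, i.e. $G$ is literally the set of directions of the vector field $\vec{V}f(t)=e^{\pm jt}$.

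Next I would check the axioms using the exponent addition law $e^{js}\cdot e^{jt}=e^{j(s+t)}$. Closure: for $e^{js},e^{jt}\in G$ we have $e^{js}\cdot e^{jt}=e^{j(s+t)}\in G$ since $s+t\in\mathbb{R}$. Associativity is inherited verbatim from associativity of multiplication in $\mathbb{C}$ (equivalently, from associativity of $+$ on the exponents). The identity is $e^{j0}=1\in G$; the inverse of $e^{jt}$ is $e^{-jt}=e^{j(-t)}\in G$, since $e^{jt}\cdot e^{-jt}=e^{j0}=1$. Commutativity follows from $e^{js}\cdot e^{jt}=e^{j(s+t)}=e^{j(t+s)}=e^{jt}\cdot e^{js}$, using commutativity of $+$ on $\mathbb{R}$. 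Hence $(G,\cdot)$ is an abelian group.

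To justify the name ``Stein--Weiss group'' I would then recall that in~\cite[\S IV.5]{Stein1971} the continuous characters of the additive group $\mathbb{R}$ (respectively of $\mathbb{R}/\mathbb{Z}$) are precisely the maps $t\mapsto e^{jt}$ (respectively $t\mapsto e^{2\pi i t}$), and that these form the dual group under pointwise multiplication. The map $t\mapsto e^{jt}$ is exactly the characteristic function $\varphi$ of the Introduction, so $G$ coincides, as a group, with that character group, which is what we mean here by a Stein--Weiss multiplicative abelian vector field group. This also shows $(G,\cdot)$ is the natural carrier of the group characteristics $\varphi_1,\varphi_2,\dots$ of Example~1.

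The only subtlety---not really an obstacle---is that $\mathbb{R}\to G$, $t\mapsto e^{jt}$, is not injective (kernel $2\pi\mathbb{Z}$), so $G$ must be regarded as the image set (the unit circle) rather than as a copy of $\mathbb{R}$; the operation is nonetheless well defined on the image, because $e^{js}=e^{js'}$ and $e^{jt}=e^{jt'}$ force $e^{j(s+t)}=e^{j(s'+t')}$. Finally, since every element of $G$ has modulus $1$, it lies on the boundary of the unit circle centred at $\vec{0}$, so the underlying vector field is stable in the sense of Def.~\ref{def:Krantz}, closing the loop with the stability discussion.
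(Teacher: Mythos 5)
Your proposal is correct and follows essentially the same route as the paper's own proof: a direct verification via the exponential law $e^{js}\cdot e^{jt}=e^{j(s+t)}$, exhibiting $e^{0}$ as the identity, $e^{-jt}$ as the inverse of $e^{jt}$, and commutativity from commutativity of addition in the exponent. You are in fact somewhat more thorough than the paper, which leaves closure, associativity, and the well-definedness of the operation on the image of $t\mapsto e^{jt}$ implicit.
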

\begin{proof}
Let $\vec{V}f = \left\{e^{\pm jt}\in\mathbb{C}\right\}$ be a vector field in the complex plane and observe
\begin{align*}
e^0 &= \vec{1}\in \vec{V}\ \mbox{\bf (multiplicative identity), since}\\
    & e^{\pm jt}e^0 = e^{\pm jt + 0} = e^{\pm jt}.\\
\vec{z}\in \vec{V}f  & = e^{jt}\\
                     &= cos t + jsin t\in \mathbb{C}.\\
\vec{z}^* &= e^{-jt}\ \mbox{\bf (conjugate of $\vec{z}$).}\\
          &= cos t - jsin t\in \mathbb{C}\ \mbox{\bf (multiplicative inverse of $e^{jt}$)}, i.e.,\\								
\vec{z}\cdot \vec{z}^* &= e^{jt}e^{-jt} = e^{j(t-t)} = e^0.  
\end{align*}
Hence, \boxed{G(\left\{e^{\pm jt}\right\},\cdot)} is a vector field group.  From 
\begin{center}
\boxed{\boldsymbol{
e^{jt}e^{jt'} = e^{jt+jt'} = e^{j(t+t')} = e^{j(t'+t)}
= e^{jt'}e^{jt},
}}
\end{center}
the Stein-Weiss group is abelian.
\end{proof}

\begin{theorem}\label{lemma:charGroup}{\bf (Stein-Weiss Additive Vector Field Group).}\\
Let the vector field $\vec{V}f:\mathbb{C}\to\mathbb{C}$ be defined by
$\vec{V}f(\pm z) = \pm z$. 
\boxed{G(\left\{\vec{V}f(\pm z)\right\},+)} is an additive abelian vector field group. 
\end{theorem}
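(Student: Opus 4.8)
The plan is to verify the group axioms for $\bigl(\{\vec{V}f(\pm z)\},+\bigr)$ directly, in the same spirit as the proof of the multiplicative Stein-Weiss lemma above, and then check commutativity. First I would record what the underlying set is: by definition $\vec{V}f(\pm z)=\pm z$ sends each $z\in\mathbb{C}$ to the vector $\vec{z}=a+jb$ and each $-z$ to $-\vec{z}=-a-jb$, so the image $S=\{\vec{V}f(\pm z):z\in\mathbb{C}\}$ is the symmetric subset of $\mathbb{C}$ containing, with each $\vec{z}$, also $-\vec{z}$ (in fact $S=\mathbb{C}$). The point worth stating explicitly is that $S$ is closed under $+$: if $\vec{V}f(z_1)=z_1$ and $\vec{V}f(z_2)=z_2$, then $z_1+z_2\in\mathbb{C}$, hence $z_1+z_2=\vec{V}f(z_1+z_2)\in S$, and similarly for sums involving the $-z$ branch.

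Next I would exhibit the additive identity: $\vec{0}=0+j0=\vec{V}f(0)\in S$ satisfies $\vec{z}+\vec{0}=\vec{z}$ for every $\vec{z}\in S$, which plays the role that $e^0=\vec{1}$ played in the multiplicative lemma. Then I would produce inverses: for $\vec{z}=\vec{V}f(z)\in S$, the vector $-\vec{z}=\vec{V}f(-z)$ again lies in $S$ precisely because the $\pm$ in the definition of $\vec{V}f$ is exactly the statement that $S$ is stable under negation, and $\vec{z}+(-\vec{z})=\vec{0}$. Associativity of $+$ on $S$ is inherited verbatim from associativity of addition on $\mathbb{C}$, so $(\vec{z}_1+\vec{z}_2)+\vec{z}_3=\vec{z}_1+(\vec{z}_2+\vec{z}_3)$. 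This shows $G(\{\vec{V}f(\pm z)\},+)$ is a group.

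Finally, commutativity is inherited from $\mathbb{C}$: writing $\vec{z}_1=a_1+jb_1$ and $\vec{z}_2=a_2+jb_2$, we get $\vec{z}_1+\vec{z}_2=(a_1+a_2)+j(b_1+b_2)=(a_2+a_1)+j(b_2+b_1)=\vec{z}_2+\vec{z}_1$, so the group is abelian, which is the claim.

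I do not expect a genuine obstacle here: every axiom descends from the field structure of $\mathbb{C}$, and the argument is shorter than the multiplicative case because no conjugation is needed. The single step that deserves a sentence of care — the analogue of the only non-automatic line in the multiplicative lemma — is confirming that the image $S$ is honestly closed under $+$ and under $\vec{z}\mapsto-\vec{z}$; once the $\pm$ in $\vec{V}f(\pm z)=\pm z$ is read as asserting $\vec{z}\in S\iff-\vec{z}\in S$ (and one observes $S=\mathbb{C}$), closure is immediate and the rest is routine.
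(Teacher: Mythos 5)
Your proposal is correct and follows essentially the same route as the paper's own proof: a direct verification of the additive group axioms (identity $\vec{0}=0+0j$, inverse $-\vec{z}=-a-bj$, commutativity inherited from $\mathbb{C}$). In fact you are somewhat more careful than the paper, which omits closure and associativity and states commutativity with leftover exponential notation $e^{jt}+e^{jt'}=e^{jt'}+e^{jt}$ from the multiplicative lemma, whereas you check these points explicitly in rectangular form.
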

\begin{proof}
Let $\vec{V}f = \left\{\vec{V}f(\pm z)\right\}$ be a vector field in the complex plane and observe
\begin{align*}
\vec{0}\in \vec{V} & = 0 + 0j\ \mbox{(zero) vector = additive identity)}.\\ 
\vec{z}\in \vec{V}f  & = a + bj,\ a,b\in \mathbb{R}.\\
\vec{-z}\in \vec{V}  & = -a - bj.\\
\vec{z} + \vec{0} &= a + bj + 0 + 0j = a + bj = \vec{z}.\\
\vec{z} + \vec{-z} &= a + bj + (-a - bj) = 0 + 0j = \vec{0}   
\end{align*}
Hence, \boxed{G(\left\{\vec{V}f(\pm z)\right\},+)} is a vector field group.
Also,   
\begin{center}
\boxed{\boldsymbol{
e^{jt}+e^{jt'} = e^{jt'}+e^{jt},
}}
\end{center}
Hence, $G$ is abelian.
\end{proof}

\vspace*{0.2cm}

\begin{figure}[!ht]
	\centering
\includegraphics[width=95mm]{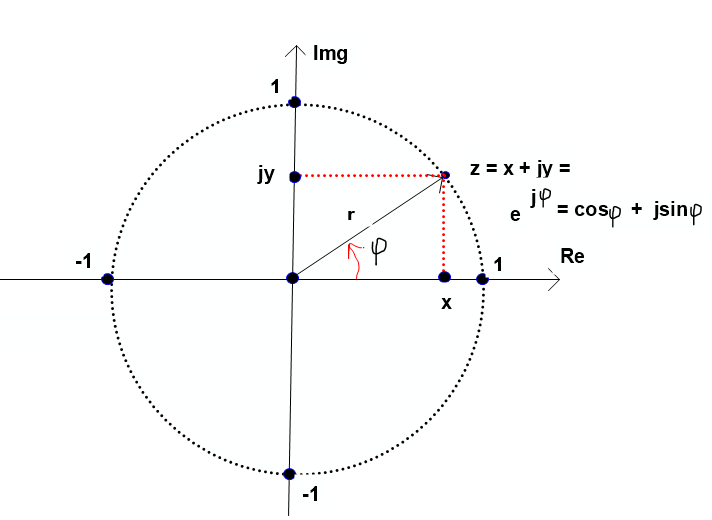}
  \caption{Polar Complex Plane}
	\label{fig:polarZ}
\end{figure}

\vspace*{0.2cm}

\begin{definition}\label{def:charFn}{\bf (Characteristic Function in Complex Plane).}\\
A characteristic mapping $\varphi:\mathbb{C}\to \mathbb{C}$ is defined by
\begin{center}
\boxed{\boldsymbol{
\varphi(z\in \mathbb{C})=\varepsilon\in [-1,1]\in 2^{\mathbb{C}}.
}}
\end{center}
Every object $A$ has a set of characteristics $\Phi(A)$, i.e.,
\begin{center}
\boxed{\boldsymbol{
\Phi(A)=\left\{\varphi(a_i):i\geq 1,a_i\in A \right\}
\in 2^{\mathbb{C}}.\ 
\mbox{\qquad \textcolor{blue}{$\blacksquare$}}
}}
\end{center}

\end{definition}

\vspace*{0.2cm}

\begin{remark}{\bf (Characteristic Functions are Holomorphic).}\\
Let $z_0,h$ be complex numbers in $\mathbb{C}$ (complex plane), for $h\neq 0$. The function $f:\mathbb{C}\to\mathbb{C}$ is holomorphic~\cite[\S 2.1, p. 8]{Stein2003}, provided 
\begin{center}
\boxed{\boldsymbol{
f'(z_0\in\mathbb{C}) = 
\mathop{lim}\limits_{h\to 0} \frac{f(z_0+h) - f(z_0)}{h}\to 0,
}}
\end{center}

\vspace*{0.5cm}

\noindent i.e., the derivative of $f$ converges to 0 when $h\to 0$. 
For example, the characteristic function $\varphi(z)$ is holomorphic, since $\varphi(z)$ is a constant and the derivative \boxed{\varphi'(z_0)=0} for every fixed complex number $z_0$ in the complex plane (see, e.g., a similar observation by S. Krantz~\cite{Krantz2008}).
\qquad \textcolor{blue}{$\blacksquare$}
\end{remark}

\vspace*{0.2cm}

\begin{definition}\label{def:dnear0}{\bf (Characteristic Distance).}\\  
Let $X,Y\in 2^{\mathbb{C}}, A\in 2^X,B\in 2^Y$ be nonempty sets in the polar complex plane and $a\in A,b\in B$ and let $\varphi(a)\in\Phi(A),\varphi(b)\in\Phi(B)$ be characteristics of objects $a\in A$ and $b\in B$.  
The characteristic distance mapping $d^{\Phi}: 2^X\times 2^Y\to\mathbb{C}$ is defined by\\

\vspace*{0.2cm}

	\begin{center}
		\boxed{\boldsymbol{
		d^{\Phi}(A,B)=\inf_{\substack{a\in A \\ b\in B}} 
		\left\{(\varphi(a) - \varphi(b))mod 2\right\} = \varepsilon\in[-1,1]\in 2^{\mathbb{C}}.
\mbox{\qquad \textcolor{blue}{$\blacksquare$}}
	}}
	\end{center}
	
\end{definition}

\vspace*{0.2cm}

\begin{remark}
The original version of the characteristic distance (limited to distances in [0,1] for characteristically near sets) was introduced in~\cite[Def. 1]{peters2025indefinitedescriptiveproximitiesinherent}.\mbox{\qquad \textcolor{blue}{$\blacksquare$}}
\end{remark}

\vspace*{0.2cm}
  
\begin{example}\label{def:char}{\bf (Stein-Weiss Characteristic~\cite[\S 5.1, p. 173]{Stein1971})).}\\
The mapping $\varphi:\mathbb{C}\to\mathbb{C}$ defined by \boxed{\varphi(t)=e^{\pm jt}} is characteristic. 
\qquad \textcolor{blue}{$\blacksquare$}
\end{example} 

\vspace*{0.2cm}

\begin{remark}
In Def.~\ref{def:charFn}, the values of the characteristic function $\varphi(z)$ are in the interval [-1,1] in the polar complex plane, which dovetails with characteristics of pairs of nonempty sets that are either characteristically near or far apart (see, e.g.,~\cite{Naimpally2013}).  That is, pairs of sets $A$ and $B$ have a characteristic distance that is in one of the following intervals:

\begin{align*}
\boldsymbol{0}\ & \mbox{if $A$ and $B$ have matching characteristics.}\\
\boldsymbol{(0,1]}\ & \ \mbox{if $A$ and $B$ are characteristically close, i.e.,}\\
   &~\mbox{if $A$ and $B$ have numerically close characteristics.}\\
\vec{z}\in \vec{V}f  & = a + bj,\ a,b\in \mathbb{R}.\\
\boldsymbol{-1} &\ \mbox{if $A,B$ have no matching characteristics, i.e.,}\\
   &\ \mbox{if $A,B$ are characteristicall far apart}\\
\boldsymbol{(-1,0)} &\ \mbox{if $A,B$ have non-matching but numerically close characteristic values.}
\end{align*}

In other words, the interval \boxed{\boldsymbol{\ [-1,1]\ }} represents the spectrum of characteristics distances between pairs of objects that are either characteristically near each other in the interval \boxed{\boldsymbol{\ [0,1]\ }} or characteristically far apart in the interval \boxed{\boldsymbol{\ [-1,0)\ }}.
\qquad \textcolor{blue}{$\blacksquare$}
\end{remark}

\vspace*{0.2cm}




	%
	
\begin{example}{\bf (Sample Characteristic Distances).}\\
Characteristic distance satisfying Def.~\ref{def:dnear0} can be found in a vector field in the polar form of the complex plane $\mathbb{C}$ (see Fig.~\ref{fig:polarZ}) for $A=\left\{a\right\},B=\left\{b\right\}\in 2^{\mathbb{C}}$, e.g.,
\begin{align*} 
d^{\Phi}(A,B)=-1 &\ \mbox{for}\ \varphi(a)=2+j, \varphi(b)=3+j.\\
d^{\Phi}(A,B)=-0.5 &\ \mbox{for}\ \varphi(a)=-2.5+j, \varphi(b)=2.0+j.\\
d^{\Phi}(A,B)=1 &\ \mbox{for}\ \varphi(a)=1+0j, \varphi(b)=0+0j.\\
d^{\Phi}(A,B)=0.5 &\ \mbox{for}\ \varphi(a)=2.5+j, \varphi(b)=2.0+j.\\
d^{\Phi}(A,B)=0 &\ \mbox{for}\ \varphi(a)=-1+j, \varphi(b)=-1+j,\varphi(a)-\varphi(b)=0.\mbox{\qquad \textcolor{blue}{$\blacksquare$}}\\
\end{align*}
\end{example}
				

\noindent In effect, $A,B$ are characteristically near,  provided\\

\vspace*{0.2cm}

\begin{center}
\boxed{\boldsymbol{ 
0\leq d^{\Phi}(A,B) \leq 1\ \mbox{\bf (Close sets)} 
}}
\end{center}

\vspace*{0.2cm}

\noindent in the first quadrant of the unit circle in the complex plane $\mathbb{C}$.  Similarly,
$A,B$ are characteristically far apart, provided\\

\vspace*{0.2cm} 

\begin{center}
\boxed{\boldsymbol{ 
1\leq d^{\Phi}(A,B) < 0\ \mbox{\bf (Far apart sets)}. 
}}
\end{center} 

\vspace*{0.2cm}

\section{Characteristic Nearness Approximation Spaces (cNAS)}
\noindent The basis for each cNAS built on a nonempty set of objects \boxed{\mathcal{O}=\left\{x_i\right\},1\leq i\leq \infty} is threefold, namely,\\ 

\vspace*{0.2cm}

\begin{compactenum}[{{\bf cNAS}-}1$^o$]
\item The identification of objects in a cNAS is guided by Axiom~\ref{axiom:cNASobject}.\\

\vspace*{0.2cm}

\begin{mdframed}[backgroundcolor=green!15]
\begin{axiom}\label{axiom:cNASobject}{\bf (Object Characteristic Axiom).}\\
The characteristic of every object is a complex number.
\qquad \textcolor{blue}{$\blacksquare$}
\end{axiom}
\end{mdframed}\mbox{}\\
 
\vspace*{0.2cm}

\item The identification of a characteristic mapping for each characteristic of an object in $x\in \mathcal{O}$, namely,\\
\vspace*{0.2cm}
 
\begin{center}
\boxed{\boldsymbol{
\varphi:\mathcal{O}\to \mathbb{C},\ \mbox{defined by}\ 
        \varphi(x)= a + bi\in \mathbb{C}, a,b\in\mathbb{R}.
}}
\end{center}

\vspace*{0.2cm}

\noindent from a set of objects to the complex plane
\item Identification of a vector of characteristics
%
  
\begin{center}
\boxed{\boldsymbol{
\vec{\Phi}(\mathcal{O}) =\left(\varphi_i(x),1\leq i\leq \infty\right)\in\mathbb{C}^n    
				\mbox{(vector of characteristics of $x\in\mathcal{O}$)}.
}}
\end{center}

\vspace*{0.2cm}

\noindent of objects in a set.  Each $\varphi_1(x)$ is a characteristic of $x\in\mathcal{O}$. \qquad \textcolor{blue}{$\blacksquare$}
\end{compactenum}\mbox{}\\

\vspace*{0.2cm}

\noindent A framework for the introduction of a cNAS is given in Def.~\ref{def:cNAS}.\\

\vspace*{0.2cm}

\begin{definition}\label{def:cNAS}{\bf (Framework for a Characteristic Nearness Approximation Space).}\\  
A Characteristic Nearness Approximation Space (cNAS) is a tuple\\ 
\begin{center} 
\boxed{\boldsymbol{
\left(\mathcal{O},\vec{\Phi}(\mathcal{O}),d^{\Phi}(A,B),x \ \widetilde{\ \Phi\ }\  y,\left[ x \right],N(\vec{\Phi}(\mathcal{O})),N_*(B),N^*(B)\right)
}}
\end{center}

\vspace*{0.2cm}

defined in the following table. 
\begin{equation*}
\begin{array}{c|l}
Symbol & Interpretation \\ \hline 
\mathcal{O} &=\left\{x_i\right\},1\leq i\leq \infty\ \mbox{(Set of objects)}.\\
A,B &\in 2^{\mathcal{O}}.\\
(x,y) &\in \mathcal{O}\times \mathcal{O}.\\
\varphi:\mathcal{O}\to \mathbb{C} &\mbox{defined by}\ 
        \varphi(x)= a + bi\in \mathbb{C}, a,b\in\mathbb{R}  
        \mbox{(characteristic mapping)}.\\
\vec{\Phi}(\mathcal{O}) &=\left(\varphi_i(x),1\leq i\leq \infty\right)\in\mathbb{C}^n\  \mbox{(vector of characteristics of $x\in\mathcal{O}$)}.\\
d^{\Phi}(A,B) &=\inf_{\substack{a\in A \\ b\in B}}\left\{ \varphi(a) - \varphi(b)\right\} = \varepsilon\in [-1,1]\in 2^{\mathbb{C}}\  \mbox{(characteristic distance)}\\ 
 A\ \widetilde{\ \Phi\ }\ B &= \left\{(a,b)\in A\times B:d^{\Phi}(A,B)=0\right\}.\ \mbox{(characteristic equivalence)}.\\
\left[x \right] &=\left\{y\in \mathcal{O}: y 
\ \widetilde{\ \Phi\ }\  x,\ 
x\in\mathcal{O} \right\}\ \mbox{(equivalence class)}.\\
N(\left[x\right]) &=\left\{[y]:y\ \widetilde{\ \Phi\ }\ x,\ 
x\in\mathcal{O}\right\} 
     \mbox{(set of neighboring partitions of $\mathcal{O}$)}.\\
N_*(B)X &=\mathop{\bigcup}\limits_{b\in B}\left\{\left[b\right]\in 2^B\right\}\subset \mathop{\bigcup}\limits_{x\in X}
          \left\{\left[x\right]\in 2^X\right\}\ \mbox{(lower approximation of $X\in 2^{\mathcal{O}}$)}.\\
N^*(B)X &=\mathop{\bigcup}\limits_{b\in B}\left\{\left[b\right]\in 2^B\right\}\cap \mathop{\bigcup}\limits_{b\in B}
          \left\{\left[x\right]\in X\right\}\ \mbox{(upper approximation of $X\in 2^{\mathcal{O}}$)}.\\ 
					\hline
\end{array}%
\end{equation*}
\end{definition}

\begin{equation*}
\text{\textit{Table} }4\text{. Characteristic Nearness Approximation Space Framework}
\end{equation*}
\begin{flushleft}

\end{flushleft}

\begin{remark}{\bf (Characteristic Foundation of a cNAS).}\\
A cNAS is built on a nonempty set of objects with measurable characteristics.  Each characteristics an object $x$ is represented by a characteristic function \boxed{\varphi(x)} with values in the complex plane. The foundation of a cNAS i is Axiom~\ref{ax:cNASbasis}.
\qquad \textcolor{blue}{$\blacksquare$}
\end{remark}\mbox{}\\

\vspace*{1.2cm}

\begin{mdframed}[backgroundcolor=green!15]

\begin{axiom}\label{ax:cNASbasis}{\bf (Characteristically Close Equivalence Class Axiom).}\\
Every nonempty set objects can be partitioned into  equivalence classes containing characteristically close objects.
\end{axiom}

\end{mdframed}

\vspace*{0.2cm}

\begin{figure}[!ht]
	\centering
\includegraphics[width=120mm]{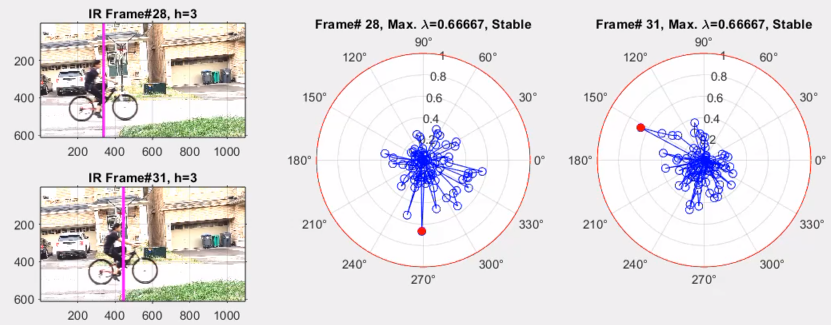}
  \caption{Characteristically near vector fields \boxed{\vec{fr}28,\vec{fr}31}}
	\label{fig:bikerVfs}
\end{figure}

\begin{table}[htbp]\label{table:sim}
\centering
\caption{Characteristics Table}
\begin{tabular}{|c|c|c|c|c|}
\hline\noalign{\smallskip}
fr  & $\varphi_{1_{\lambda_{max}}}$ & $\varphi_{2_{stable}}$ & $\varphi_{3_{unstable}}$ & 
 $frx\ \widetilde{\varphi_1,\varphi_2}\ fry$\\
\hline
\hline\noalign{\smallskip}
28 & 0.66 & 1.0 & 0.0 & $fr28\ \widetilde{\varphi_1,\varphi_2}\ fr31 \equiv d^{\Phi}(fr28,fr31)=0,$\\
& & & & $[fr28]_{\Phi}=\left\{fr28,fr31\right\}$\\
\hline\noalign{\smallskip} 
29  & 1.5 & -1.0 & -0.4 & 
$fr29\ \widetilde{\varphi_1,\varphi_2}\ fr29 
\equiv d^{\Phi}(fr29,fry)=-1,$\\
 & & & & $y=\left\{28,30,31\right\}$,\\
 & & & & $[fr29]_{\Phi}=\left\{fr29\right\}$\\
\hline\noalign{\smallskip}
30  & 0.88 & 1.0 & 0.0 & $fr30\ \widetilde{\varphi_1,\varphi_2}\ fr30 
\equiv d^{\Phi}(fr30,fry)=-1,$\\
 & & & & $y=\left\{28,29,31\right\}$,\\
 & & & & $[fr30]_{\Phi}=\left\{fr30\right\}$\\
\hline\noalign{\smallskip}
31  & 0.66 & 1.0 & -0.8 & $fr31\ \widetilde{\varphi_1,\varphi_2}\ fr30 
\equiv d^{\Phi}(fr31,fr28)=0,
$\\
 & & & & $[fr31]_{\Phi}=\left\{fr28,fr31\right\}$\\
\hline
\end{tabular}
\label{table:VfChar}
\end{table}

\vspace*{0.2cm}

\begin{example}\label{ex:sim}{\bf (Sample \boxed{frx\ \widetilde{\Phi}\ fry} Video Frame Equivalence Relations).}\\
A sample pair of characteristically near biker motion vector fields are shown in Fig.~\ref{fig:bikerVfs}.  For simplicity, we consider only three characteristics of these vector fields, namely,
\begin{align*}
\boldsymbol{\Phi} &= \boldsymbol{\left\{\varphi_{1_{\lambda_{max}}},
\varphi_{2_{stable}},
\varphi_{3_{unstable}}\right\}},i.e.,\\
\boldsymbol{\varphi_{1_{\lambda_{max}}}} &= \mbox{maximum eigenvalue in polar complex plane}.\\
\boldsymbol{\varphi_{2_{stable}}} &= 1\Leftrightarrow\lambda_{max}\in [0,1].\\
\boldsymbol{\varphi_{3_{unstable}}} &= -1\Leftrightarrow \lambda_{max}\in (-\infty,0).\\
\end{align*}

A comparison of the values for characteristic functions

\vspace*{0.2cm}

\begin{center}
\boxed{\boldsymbol{ 
\Phi = \left\{\varphi_{1_{\lambda_{max}}},
\varphi_{2_{stable}},
\varphi_{3_{unstable}}
\right\}
}}
\end{center} 

\vspace*{0.2cm}

\noindent for the vector fields in frames fr28,fr29,fr30,fr31 is given in Table~\ref{table:VfChar}.
qquad \textcolor{blue}{$\blacksquare$}
\end{example}

\vspace*{0.2cm}

\begin{example}\label{ex:classes}{\bf (Sample Video Frame Equivalence classes).}\\
From Table~\ref{table:VfChar} and Example~\ref{ex:sim}, we have the following equivalence classes:\\
\begin{center}
\begin{align*}
\boldsymbol{[fr28]} &= \left\{fr28,fr31\right\}, since \boxed{\boldsymbol{d^{\Phi}(fr28,fr31)=0}}.\\
\boldsymbol{[fr29]} &= \left\{fr29\right\}, since \boxed{\boldsymbol{d^{\Phi}(fr29,\left\{fr28,fr30,fr31\right\})=-1}}.\\
\boldsymbol{[fr30]} &= \left\{fr30\right\}, since \boxed{\boldsymbol{d^{\Phi}(fr30,\left\{fr28,fr29,fr31\right\})=-1}}.
\mbox{\qquad \textcolor{blue}{$\blacksquare$}}
\end{align*}
\end{center}
\end{example}

\vspace*{0.2cm}

\begin{example}{\bf (Sample \boxed{frx \mathop{\sim}\limits_{\Phi} fry} \& Video Frame Equivalence classes).}\\
From Table~\ref{table:VfChar} and Example~\ref{ex:classes}, we have the following nearness approximations:\\
\begin{center}
\begin{align*}
X &= \left\{[frx]: frx\in \left\{\mbox{video frames}\right\}\right\}.\\
\boldsymbol{[fr28]} &\mathop{\sim}\limits_{\Phi}  [fr31], since\ \boxed{\boldsymbol{d^{\Phi}([fr28],[fr31])=0}}.\\
\boldsymbol{[fr29]} &\mathop{\nsim}\limits_{\Phi}  [fr28], since\ \boxed{\boldsymbol{d^{\Phi}([29],[28])=-1}}.\\
\boldsymbol{[fr30]} &\mathop{\nsim}\limits_{\Phi} [fr28], since\ \boxed{\boldsymbol{d^{\Phi}([30],[28])=-1}}.
\mbox{\qquad \textcolor{blue}{$\blacksquare$}}
\end{align*}
\end{center}
\end{example}

\vspace*{0.2cm}

\begin{example}{\bf (Sample neighboring Video Frame Equivalence classes).}\\
From Table~\ref{table:VfChar} and Example~\ref{ex:classes}, we have the following neighborhoods:
\begin{center}
\begin{align*}
X &= \left\{[frx]: frx\in \left\{\mbox{video frames}\right\}\right\}.\\
\boldsymbol{N([fr28])X} &= \left\{fr28,fr31\right\}, since\ \boxed{\boldsymbol{d^{\Phi}(fr28,fr31)=0}}.\\
\boldsymbol{N([fr29])X} & \left\{fr29\right\}, since\  \boldsymbol{N([fr29]\ \nsim\ [fr28]}.\\
\boldsymbol{N([fr30])X} & \left\{fr29\right\}, since\  \boldsymbol{N([fr30]\ \nsim\ [fr28]}.\\
\mbox{\qquad \textcolor{blue}{$\blacksquare$}}
\end{align*}
\end{center}
\end{example}

\vspace*{0.2cm}

\begin{example}{\bf (Sample lower approximations of a set of Video Frame Equivalence classes).}\\
From Table~\ref{table:VfChar} and Example~\ref{ex:classes}, we have the following neighborhoods:
\begin{center}
\begin{align*}
\mathcal{O} &= \left\{[frx]: frx\in 2^{\left\{\mbox{video frames}\right\}}\right\}.\\
X &= \left\{[frx]: [28],[29],[30]\in \left\{\mbox{video frames}\right\}\right\}.\\
B_1 &= \left\{[frx]: [28]\in \left\{\mbox{video frames}\right\}\right\}.\\
B_2 &= \left\{[frx]: [29],[30]\in \left\{\mbox{video frames}\right\}\right\}.\\
\boldsymbol{N_*(\left\{B_1\right\}X} &= \mathop{\bigcup}\limits_{b\in B}\left\{\left[b\right]\in 2^{B_1}\right\}\subset \mathop{\bigcup}\limits_{x\in X}
          \left\{\left[x\right]\in 2^X\right\}\ \mbox{(lower approximation of $X$)}.\\
\boldsymbol{N^*(\left\{B_2\right\}X} &= \mathop{\bigcup}\limits_{b\in B}\left\{\left[b\right]\in 2^{B_2}\right\}\cap \mathop{\bigcup}\limits_{x\in X}
          \left\{\left[x\right]\in 2^X\right\}\ \mbox{(upper approximation of $X$)}.\mbox{
\mbox{\qquad \textcolor{blue}{$\blacksquare$}}}
\end{align*}
\end{center}
\end{example}

\begin{theorem}
Let $A,B\in 2^{\mathcal{O}}$.
$A \mathop{\sim}\limits_{\Phi}B \Leftrightarrow A\ \dnear\ B$.
\end{theorem}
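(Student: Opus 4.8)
The plan is to prove the biconditional by collapsing each side to the single numerical condition $0 \le d^{\Phi}(A,B) \le 1$ on the characteristic distance of Definition~\ref{def:dnear0}, and then reading the equivalence back off. I would organise the argument as two implications preceded by a short paragraph that fixes the reading of the two relation symbols: $\mathop{\sim}\limits_{\Phi}$ is the characteristic nearness/equivalence relation induced by the cNAS framework of Definition~\ref{def:cNAS} (the relation $A\ \widetilde{\ \Phi\ }\ B$ and its lift to equivalence classes $[x]$), and $\dnear$ is the symbolic name, recorded in the table of principal symbols, for ``$A$ characteristically near $B$'', which by the characterisation stated immediately after the Sample Characteristic Distances example means exactly $0 \le d^{\Phi}(A,B) \le 1$.

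For the forward direction I would assume $A \mathop{\sim}\limits_{\Phi} B$. By the definition of the characteristic equivalence relation in the cNAS framework table, this places the characteristic distance in the \emph{matching/close} part of the spectrum, i.e. it forces $d^{\Phi}(A,B) \in [0,1]$ rather than in the \emph{far apart} band $[-1,0)$, using the splitting of $[-1,1]$ recorded in the remark on the characteristic spectrum. Hence $A$ satisfies the defining inequality $0 \le d^{\Phi}(A,B) \le 1$ for characteristic nearness, so $A\ \dnear\ B$. For the reverse direction I would assume $A\ \dnear\ B$; then by the same characterisation $0 \le d^{\Phi}(A,B) \le 1$, so the infimum of the modular characteristic differences over $A \times B$ lies in the first quadrant of the unit circle and is in particular non-negative, whence the defining condition of $\mathop{\sim}\limits_{\Phi}$ is met and $A \mathop{\sim}\limits_{\Phi} B$. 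Chaining the two implications yields the asserted equivalence.

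The step I expect to be the main obstacle is not any computation but making the two relations genuinely coincide rather than merely each implying nearness. In Definition~\ref{def:cNAS} the relation $\widetilde{\Phi}$ is written with the strict condition $d^{\Phi}(A,B) = 0$ (matching characteristics), whereas $\dnear$ ranges over the whole band $[0,1]$; the clean biconditional therefore rests on the convention---implicit in the remark on the spectrum $[-1,1]$ and used uniformly in the video-frame Examples~\ref{ex:sim}--\ref{ex:classes}, where $\mathop{\sim}\limits_{\Phi}$ holds precisely when $d^{\Phi} \ge 0$ and fails precisely when $d^{\Phi} = -1$---that $\mathop{\sim}\limits_{\Phi}$ denotes the full characteristic-nearness relation, not merely an exact-match relation on single points. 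I would accordingly devote the bulk of the write-up to pinning down that reading and to noting that, $d^{\Phi}$ being an infimum over $A \times B$, a value in $[0,1]$ need not be realised by any single pair $(a,b)$; it is the value of that infimum, and nothing more, that both $\mathop{\sim}\limits_{\Phi}$ and $\dnear$ are testing, so once the reading is fixed the equivalence is immediate.
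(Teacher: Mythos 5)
There is a genuine gap, and it lies exactly where you located the ``main obstacle.'' You resolve the mismatch between the two relations by reading $\dnear$ as the whole band $0\le d^{\Phi}(A,B)\le 1$ and then relaxing $A\ \widetilde{\ \Phi\ }\ B$ to mean the same band, by appeal to a ``convention.'' But the paper does the opposite: its formal definition of characteristic nearness (Def.~\ref{def:nearness}, restated as Theorem~\ref{theorem:nearness}) is $A\ \dnear\ B \Leftrightarrow d^{\Phi}(A,B)=0$, which coincides literally with the clause $d^{\Phi}(A,B)=0$ defining $\widetilde{\ \Phi\ }$ in the table of Def.~\ref{def:cNAS}. The paper's own proof simply unwinds these two definitions --- ``recall that $A$ is near $B$ provided $d^{\Phi}(A,B)=0$'' --- so both sides of the biconditional are the same condition and nothing needs to be relaxed. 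Under your band reading, the reverse implication is not available from the paper's definitions: if $d^{\Phi}(A,B)\in(0,1]$ you would have $A\ \dnear\ B$ in your sense, yet the defining condition $d^{\Phi}(A,B)=0$ of $\mathop{\sim}\limits_{\Phi}$ fails, and declaring by convention that $\mathop{\sim}\limits_{\Phi}$ ``really'' means the band amounts to changing the statement rather than proving it. So the step ``whence the defining condition of $\mathop{\sim}\limits_{\Phi}$ is met'' is the gap: non-negativity of the infimum is strictly weaker than the equality the definition demands.

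To repair the argument along the paper's lines, drop the informal ``close sets'' inequality (which appears only in the narrative text before Section 3 and is in tension with Def.~\ref{def:nearness}) and take $\dnear$ to be defined by $d^{\Phi}(A,B)=0$; then both directions are immediate from Def.~\ref{def:cNAS} and Def.~\ref{def:dnear0}, which is essentially all the paper's proof does. Your closing observation --- that $d^{\Phi}$ is an infimum and a value of $0$ need not be realised by any single pair $(a,b)$ --- is sound and is in fact a fair criticism of the paper's own proof, which slides from $d^{\Phi}(A,B)=0$ to ``at least one matching characteristic''; but it does not rescue the band-versus-equality substitution on which your reverse direction rests.
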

\begin{proof}\mbox{}\\
$\Rightarrow$: From Table~\ref{table:VfChar}, we have
\begin{center}
\boxed{\boldsymbol{
A\mathop{\sim}\limits_{\Phi}B = \left\{(x,y)\in A\times B:d^{\Phi}(A,B)=0\right\}.
}}
\end{center}

Recall that $A$ is near $B$, provided $d^{\Phi}(A,B)=0$, i.e., the greatest lower bound of the set of differences $\abs{\varphi(a)-\varphi(b)}, a\in A,b\in B$, is zero.  In other words, for the relation between $A,B$ in $A\mathop{\sim}\limits_{\Phi}B$, the pairs of objects $(x,y)\in A\times B$ have at least on matching characteristic.\\
$\Leftarrow$: From Def.~\ref{def:dnear0} and the proof of $\Rightarrow$, we have
\begin{center}
\boxed{\boldsymbol{
A\ \dnear B \equiv \left\{(x,y)\in A\times B:d^{\Phi}(A,B)=0\right\}\ \Leftrightarrow A\mathop{\sim}\limits_{\Phi}B.
}}
\end{center}
\end{proof}

\begin{definition}\label{def:nearness}{\bf (Characteristic Nearness of Systems~\cite{peters2025}).}\\
Let $X,Y$ be a pair of systems.  For nonempty subsets $A\in 2^X, B\in 2^Y$, the characteristic nearness of $A,B$ (denoted by $A\ \rnear\ B$) is defined by\\

\vspace*{0.2cm}

\begin{center}
\boxed{\boldsymbol{
A\ \dnear\ B \Leftrightarrow
d^{\boldsymbol{\Phi}}(A,B)=0.
\mbox{\qquad \textcolor{blue}{$\blacksquare$}}  
}}
\end{center}
\end{definition}

\vspace*{0.2cm}

\begin{theorem}\label{lemma:nearLowerApprox}{\bf (Charactistically Near Lower Approximations).}\\
\boxed{N(\vec{\Phi}(A))\ \dnear\ N(\vec{\Phi}\iota(B))\ \Leftrightarrow N_*(A)X\ \dnear\ 
N_*(B)X.}  
\end{theorem}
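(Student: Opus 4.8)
The statement asserts a biconditional between characteristic nearness of the neighborhood systems $N(\vec{\Phi}(A))$, $N(\vec{\Phi}(B))$ and characteristic nearness of the lower approximations $N_*(A)X$, $N_*(B)X$. Since, by Definition~\ref{def:nearness}, both occurrences of $\dnear$ unwind to the single numerical condition $d^{\Phi}(\cdot,\cdot)=0$, the plan is to reduce the whole claim to the assertion that $d^{\Phi}(N(\vec{\Phi}(A)),N(\vec{\Phi}(B)))=0$ holds exactly when $d^{\Phi}(N_*(A)X,N_*(B)X)=0$. The key observation to exploit is that, by the cNAS framework (Definition~\ref{def:cNAS}), the lower approximation $N_*(B)X$ is the union $\bigcup_{b\in B}\{[b]\}$ of those equivalence classes $[b]$ that sit inside the partition of $X$, and each such class is generated by the characteristic equivalence $\widetilde{\ \Phi\ }$, i.e.\ every member of $[b]$ shares the characteristic value $\varphi(b)$. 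Hence the set of characteristic values $\Phi(N_*(B)X)$ occurring among the objects of the lower approximation coincides with the set of characteristic values occurring in $N(\vec{\Phi}(B))$; the same holds for $A$. Since the infimum in $d^{\Phi}$ depends only on these value-sets, the two characteristic distances are literally equal, which gives the biconditional for free.

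\textbf{Steps, in order.} First I would spell out the ($\Rightarrow$) direction: assume $N(\vec{\Phi}(A))\ \dnear\ N(\vec{\Phi}(B))$, so by Definition~\ref{def:nearness} there exist $a\in A$, $b\in B$ with $\varphi(a)=\varphi(b)$ (an infimum of nonnegative moduli equal to $0$ is attained here because the value set is finite / the objects realizing it are present in the neighborhood system). Second, I would invoke the construction of $N_*$ in Table~4 to note that the classes $[a]$ and $[b]$ are among those comprising $N_*(A)X$ and $N_*(B)X$ respectively, and that $a\in[a]\subseteq N_*(A)X$, $b\in[b]\subseteq N_*(B)X$; therefore the pair $(a,b)$ witnesses $d^{\Phi}(N_*(A)X,N_*(B)X)=0$, i.e.\ $N_*(A)X\ \dnear\ N_*(B)X$. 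Third, for ($\Leftarrow$) I would run the argument in reverse: a witnessing pair $(x,y)\in N_*(A)X\times N_*(B)X$ with $\varphi(x)=\varphi(y)$ lies in classes $[x]=[a']$, $[y]=[b']$ for suitable $a'\in A$, $b'\in B$, and by the definition of $\widetilde{\ \Phi\ }$ one has $\varphi(a')=\varphi(x)=\varphi(y)=\varphi(b')$, so the same matching characteristic appears in $N(\vec{\Phi}(A))$ and $N(\vec{\Phi}(B))$, giving $d^{\Phi}=0$ there. Finally I would package both directions as the chain of equivalences through $d^{\Phi}(\cdot,\cdot)=0$.

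\textbf{Main obstacle.} The routine parts — translating $\dnear$ into $d^{\Phi}=0$ and reversing a witnessing pair — are immediate from the definitions. The real care is needed at the point where one claims that a characteristic value matched somewhere in $N(\vec{\Phi}(B))$ is \emph{also} matched by an object actually lying in the lower approximation $N_*(B)X$, and conversely. This is exactly where the hypothesis that $[b]$ belongs to the partition of $X$ (the defining containment $\bigcup_{b\in B}\{[b]\}\subseteq\bigcup_{x\in X}\{[x]\}$ in the $N_*$ row of Table~4) has to be used: without it a matching class might be discarded when passing to $N_*$. I would therefore state explicitly that $A,B\in 2^{X}$ (or at least that the relevant classes survive in $X$), so that $[a]\subseteq N_*(A)X$ and $[b]\subseteq N_*(B)X$, and flag this as the hypothesis making the equivalence go through. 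A secondary, purely expository point is the $\iota$ appearing as $\vec{\Phi}\iota(B)$ in the statement, which I would read as a typo for $\vec{\Phi}(B)$ and treat accordingly.
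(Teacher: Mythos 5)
Your proposal is correct and follows essentially the same route as the paper: identify characteristic values with equivalence classes and transfer a witnessing pair with matching characteristics (via Theorem~\ref{theorem:nearness} / Definition~\ref{def:nearness}) between the neighborhood systems $N(\vec{\Phi}(A)),N(\vec{\Phi}(B))$ and the lower approximations $N_*(A)X,N_*(B)X$. Your write-up is in fact more complete than the paper's, which only sketches the forward implication (asserting that nearness of the neighborhood systems yields classes $[a],[b]$ witnessing $N_*(A)X\ \dnear\ N_*(B)X$), whereas you also argue the converse and make explicit the needed hypothesis that the witnessing classes survive into the partition of $X$ (and you correctly read the stray $\iota$ in $\vec{\Phi}\iota(B)$ as a typo for $\vec{\Phi}(B)$, as the paper's own proof implicitly does).
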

\begin{proof}
Let $\varphi(x)\in \vec{\Phi}(X):\varphi(x) = [x]$ and let $A,B\subset X$.  Then,
from Theorem~\ref{theorem:nearness},$N(\vec{\Phi}(A))\ \dnear\ (\vec{\Phi}(B))$ implies 
\begin{center}
\boxed{\boldsymbol{
\exists [a]\in N(\vec{\Phi}(A)), [b]\in N(\vec{\Phi}(B)):N_*(A)X\ \dnear\ 
N_*(B)X.
}}
\end{center}
\end{proof}

\vspace*{1.2cm}

\begin{mdframed}[backgroundcolor=green!15]
\begin{theorem}\label{theorem:nearness}{\bf (Fundamental Theorem of Near Systems)}.\\
Let $X,Y$ be a pair of systems with $A\in 2^X,B\in 2^Y$.

\begin{center}
\boxed{\boldsymbol{
A\ \dnear\ B\ \Leftrightarrow\ \exists a\in A,b\in B:
\abs{\varphi(a)-\varphi(b)} = 0.
}}
\end{center}
\end{theorem}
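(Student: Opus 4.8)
The plan is to unwind the two sides of the biconditional directly against Definition~\ref{def:nearness} and Definition~\ref{def:dnear0}, since both are really statements about the infimum of a finite (or at least bounded) family of characteristic differences. First I would take the forward direction: assume $A\ \dnear\ B$. By Definition~\ref{def:nearness} this means $d^{\boldsymbol{\Phi}}(A,B)=0$, and by Definition~\ref{def:dnear0} the quantity $d^{\boldsymbol{\Phi}}(A,B)$ is the infimum over $a\in A$, $b\in B$ of $(\varphi(a)-\varphi(b))\bmod 2$, with values in $[-1,1]$. The key step is to argue that this infimum, being $0$ and being taken over characteristic values that live in the discrete-looking lattice $[-1,1]\cap 2^{\mathbb C}$ of characteristic outputs, is actually attained: there exist $a\in A$, $b\in B$ with $\varphi(a)-\varphi(b)=0$, hence $\abs{\varphi(a)-\varphi(b)}=0$. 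I would lean on the remark following Definition~\ref{def:dnear0} (the characteristic distance takes the value $0$ exactly when $A$ and $B$ have matching characteristics) and on the earlier ``Sample Characteristic Distances'' example, where $d^{\Phi}(A,B)=0$ is witnessed by an actual equality $\varphi(a)-\varphi(b)=0$.

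For the reverse direction, suppose there are $a\in A$, $b\in B$ with $\abs{\varphi(a)-\varphi(b)}=0$, i.e.\ $\varphi(a)=\varphi(b)$. Then $(\varphi(a)-\varphi(b))\bmod 2 = 0$, so the infimum defining $d^{\boldsymbol{\Phi}}(A,B)$ is at most $0$. To conclude it equals $0$ I would invoke the convention, spelled out in the remarks around Definition~\ref{def:dnear0}, that the characteristic distance of near sets lies in $[0,1]$ and that negative values are reserved for the ``far apart'' case; combined with the existence of the matching pair, the only consistent value is $d^{\boldsymbol{\Phi}}(A,B)=0$, which by Definition~\ref{def:nearness} is exactly $A\ \dnear\ B$. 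One can also phrase this half as a direct consequence of the characteristic equivalence clause in Table~4 (Def.~\ref{def:cNAS}): the existence of the matching pair puts $(a,b)$ in $A\ \widetilde{\ \Phi\ }\ B$, and the earlier theorem $A\mathop{\sim}\limits_{\Phi}B \Leftrightarrow A\ \dnear\ B$ closes the loop.

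I expect the main obstacle to be the ``infimum is attained'' step in the forward direction: an infimum equal to $0$ does not by itself produce a pair realizing $\abs{\varphi(a)-\varphi(b)}=0$ unless one knows something more, e.g.\ that $A$ and $B$ are finite, or that the set of characteristic values is closed/discrete in the relevant sense, or that the $\bmod\,2$ operation forces the value into a set where $0$ is isolated. Here the intended setting is finite vector field groups, so I would make the finiteness of (the relevant characteristic values attached to) $A$ and $B$ explicit and note that an infimum of finitely many reals is a minimum; then $d^{\boldsymbol{\Phi}}(A,B)=0$ is literally the minimum, so it is achieved. With that hypothesis in hand both directions are short, and the theorem reduces to bookkeeping between Definitions~\ref{def:dnear0} and~\ref{def:nearness} together with the $[0,1]$ versus $[-1,0)$ dichotomy of the characteristic distance.
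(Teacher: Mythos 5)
Your proposal follows essentially the same route as the paper's own proof: both directions are obtained by unwinding Definition~\ref{def:dnear0} and Definition~\ref{def:nearness}, with a matching pair $\varphi(a)=\varphi(b)$ witnessing $d^{\Phi}(A,B)=0$ and conversely. The only real difference is that you explicitly flag (and repair, via a finiteness hypothesis making the infimum a minimum) the ``infimum is attained'' step in the forward direction, which the paper's proof simply asserts without comment, so your version is if anything more careful than the original.
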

\begin{proof}$\mbox{}$\\
\boxed{\Rightarrow}: From Def.~\ref{def:dnear0}, $A\ \dnear\ B$ implies that there is at least one pair $a\in A, b\in B$ such that $d^{\Phi}(A,B)=\abs{\varphi(a) - \varphi(b)}  = 0$, i.e, when $a=b$.\\
\noindent \boxed{\Leftarrow}:
Given $d^{\Phi}(A,B) =0$, we know that $\inf_{\substack{a\in A \\ b\in B}} \abs{\varphi(a) - \varphi(b)} = 0\in\mathbb{C}$. Hence, from Def.~\ref{def:nearness}, $A\ \dnear\ B$, also.  
That is, sufficient nearness of at least one pair characteristics $\varphi(a\in A), \varphi(b\in B)\in [0,1]$ such that $a=b$ indicates the characteristic nearness of the sets, i.e., we conclude $A\ \dnear\ B$. 
\end{proof}
\end{mdframed}

\vspace*{0.2cm}



\vspace*{0.2cm}

\begin{remark}
From Lemma~\ref{lemma:nearLowerApprox}, lower approximations that have characteristally near equivalence classes are characteristically near.  
\qquad \textcolor{blue}{$\blacksquare$}
\end{remark}

\begin{figure}[!ht]
	\centering
\includegraphics[width=125mm]{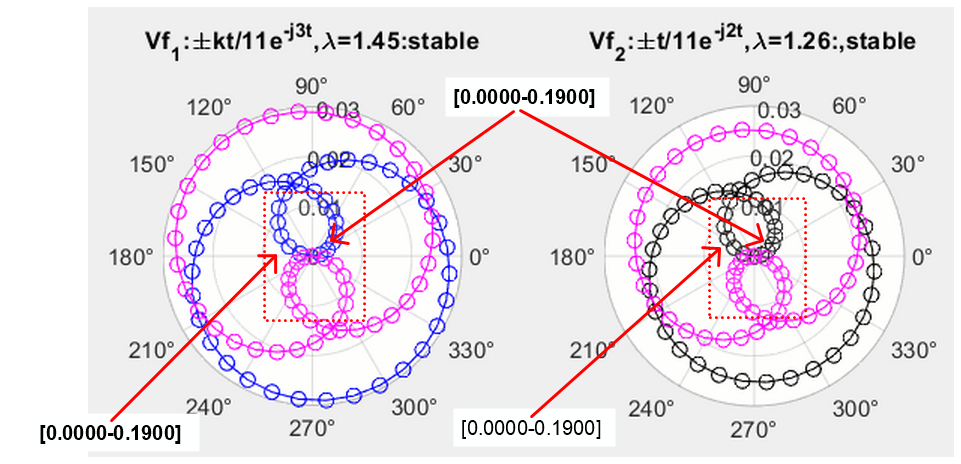}
  \caption{Sample Vector Field Equivalence Classes:
	\boxed{\vec{V}f_1,\vec{V}f_2:[0.0000-0.1900j],[0.0000+0.1900j]}}
	\label{fig:classes}
\end{figure}

\begin{table}[htbp]\label{table:eigv}
\centering
\caption{Equivalence Classes in Fig.~\ref{fig:classes}}
\begin{tabular}{|c|c|c|}
\hline
{\bf [0.0000+0.1900j]} \textbf{Qd 1}  &  {\bf [0.0000-0.1900j]} \textbf{Qd 2} & {\bf [-0.0000-0.1900j]} \textbf{Qd 3}\\
\hline
\end{tabular}
\label{table:eigv2}
\end{table}

\vspace*{0.2cm}

\section{Characteristically Near Stein-Weiss Vector Field Groups}

In this section, we introduce characteristically near vector field groups.   First, we consider a pair of groups inherent in Euler vector fields in the complex plane.

\begin{lemma}\label{lemma:symmetricFields}{\bf (Symmetric Euler Vector Field).}\\
Every Euler vector field \boxed{\vec{V}(t) = e^{jt},t\in [-k,+k]\in\mathbb{C},k\in\mathbb{R}} is symmetric about the origin in the complex plane.
\end{lemma}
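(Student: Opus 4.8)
The plan is to show that the Euler vector field, regarded as the subset $S=\left\{e^{jt}:t\in[-k,+k]\right\}$ of the complex plane, is carried onto itself by the point reflection $z\mapsto -z$; by definition this is exactly what it means for the field to be symmetric about the origin $\vec{0}$.

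First I would expand each field vector by Euler's formula, $e^{jt}=\cos t+j\sin t$, so that the point reflection acts by
\[
-e^{jt}=-\cos t-j\sin t=\cos(t+\pi)+j\sin(t+\pi)=e^{j(t+\pi)}=e^{j\pi}\,e^{jt}.
\]
Thus negation of a field vector is nothing but a rotation by $\pi$, i.e.\ multiplication by $e^{j\pi}=-1$, which again lands on the unit circle. Hence $-e^{jt}$ is again of the Euler form $e^{js}$ with $s=t+\pi$, so $-\vec{v}\in S$ whenever $\vec{v}=e^{jt}\in S$, using that $t\mapsto e^{jt}$ is $2\pi$-periodic so that the parameter may be taken over a full turn (equivalently, $S$ is the unit circle, which is closed under rotation by $\pi$). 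The converse inclusion is immediate from the same identity applied to $s=t-\pi$, and together these give $S=-S$, i.e.\ the field is invariant under reflection through the origin, as claimed.

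I would also record the shorter alternative route via the group structure: by Theorem~\ref{lemma:charGroup} the Euler vector field underlies an abelian group whose identity is $\vec{0}$ (additively), and the existence of the antipode $-\vec{z}$ of every vector $\vec{z}$ in the field is precisely central symmetry about $\vec{0}$; the multiplicative version of the same lemma supplies the inverse $e^{-jt}$ of $e^{jt}$ and, combined with the rotation-by-$\pi$ identity above, yields the same conclusion.

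The step I expect to be the main obstacle is making the parameter bookkeeping honest: for a generic $k\in\mathbb{R}$ the arc $\left\{e^{jt}:t\in[-k,k]\right\}$ need not literally contain $e^{j(t+\pi)}$, so I would either reduce to the case in which the parameter sweeps a full period (so that $S$ is the whole unit circle) by $2\pi$-periodicity of $e^{jt}$, or invoke the $\pm$ sign convention used for vector fields elsewhere in the paper (e.g.\ Example~\ref{ex:2vectorFields}), under which the field $\left\{\pm e^{jt}\right\}$ is closed under negation by construction. Everything else reduces to the one-line identity $-e^{jt}=e^{j(t+\pi)}$.
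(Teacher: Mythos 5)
Your route is genuinely different from the paper's. The paper's entire proof is one sentence: since the parameter ranges over the symmetric interval $[-k,+k]$, the field is declared symmetric about the origin. Implicitly it only uses the fact that $t$ and $-t$ both occur, i.e.\ that the conjugate pair $e^{jt}$ and $e^{-jt}=\overline{e^{jt}}$ both lie in the field --- which is mirror symmetry across the real axis, not invariance under the point reflection $z\mapsto -z$. You instead take ``symmetric about the origin'' literally as $S=-S$ and prove it from the identity $-e^{jt}=e^{j(t+\pi)}$, which is the sharper reading, and the obstacle you flag is exactly what the paper's one-liner glosses over: for $k<\pi$ the arc $\left\{e^{jt}:t\in[-k,k]\right\}$ is not closed under negation, so the literal statement needs either your full-period reduction, the $\pm$ sign convention of Example~\ref{ex:2vectorFields}, or the weaker conjugation-symmetry reading to hold. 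In that sense your proposal is correct under the repairs you name and is more careful than the paper's argument, which offers nothing beyond restating the symmetry of the parameter interval; what the paper's reading buys is only agreement with how the lemma is later invoked (pairs $e^{jt},e^{-jt}$ in one field), while yours buys genuine point symmetry at the stated cost. One caution on your ``shorter alternative route'': the multiplicative inverse supplied by Lemma~\ref{lemma:charGroup} is $e^{-jt}$, the conjugate of $e^{jt}$, not its antipode $-e^{jt}$, and the additive group in the paper concerns the field $\vec{V}f(\pm z)=\pm z$, which is closed under negation by fiat rather than by anything about $e^{jt}$; so the group structure cannot replace the rotation-by-$\pi$ identity, and leaning on inverses alone would reproduce the same conflation of conjugation with central symmetry that the paper's own proof makes.
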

\begin{proof}
Since $t\in [-k,+k]$ in $\mathbb{C}$, then $\vec{V}(t)$ is symmetric about the origin in the complex plane.
\end{proof}

\begin{lemma}\label{lemma:groupClasses}{\bf (Symmetric Vector Field Group Classes).}\\
Every Vector Field Group symmetric about the origin contains equivalence classes with real parts that differ only in terms of their sign.
\end{lemma}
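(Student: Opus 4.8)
The plan is to unwind the definitions of a symmetric vector field group (Lemma~\ref{lemma:symmetricFields}), the characteristic function $\varphi$ (Def.~\ref{def:charFn}), and the characteristic equivalence relation $\widetilde{\ \Phi\ }$ from the cNAS framework (Def.~\ref{def:cNAS}), and show that symmetry about the origin forces the equivalence classes to pair up a vector with its reflection. First I would fix a vector field group $G(\{\vec{V}f(\pm z)\},+)$ symmetric about $\vec 0$, so that $\vec z = a + bj \in G$ implies $-\vec z = -a - bj \in G$, exactly as in the proof of the additive Stein-Weiss group (Theorem~\ref{lemma:charGroup}) and in Lemma~\ref{lemma:symmetricFields}. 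Writing each vector in polar form $z = re^{j\theta}$, the reflection through the origin sends $\theta \mapsto \theta + \pi$, which negates both $\cos\theta$ and $\sin\theta$; hence the real part $\operatorname{Re}(z) = r\cos\theta$ of $-\vec z$ is $-\operatorname{Re}(\vec z)$, i.e. the two differ only in sign.

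Next I would pass from the raw vectors to the characteristic values. For the characteristic $\varphi$ that records the real part (equivalently $\varphi_i(z) = \cos\theta \in [-1,1]$, which lands in the interval demanded by Def.~\ref{def:charFn}), the pair $\vec z$ and $-\vec z$ have characteristic values that are negatives of one another, so $\varphi(\vec z) - \varphi(-\vec z) = 2\operatorname{Re}(\vec z)$. When $\operatorname{Re}(\vec z) = 0$ this difference is $0$, so $d^{\Phi}(\{\vec z\},\{-\vec z\}) = 0$ and by the characteristic equivalence in Def.~\ref{def:cNAS} the two vectors lie in the same class $[\vec z]$; when $\operatorname{Re}(\vec z) \neq 0$ they sit in mirror-image classes whose representatives' real parts differ only in sign. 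In either case, collecting the classes $\{[x] : x \in G\}$ and using the symmetry $x \mapsto -x$ as an involution on $G$, I get a pairing of equivalence classes whose real parts are related by negation — precisely the assertion. I would close by invoking the Fundamental Theorem of Near Systems (Theorem~\ref{theorem:nearness}) to phrase the coincidence case as characteristic nearness, keeping the language consistent with the rest of the section.

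The main obstacle I anticipate is bookkeeping rather than depth: the statement does not pin down \emph{which} characteristic is meant by ``real parts,'' so I must be careful to work with the coordinate characteristic $\varphi_i(z) = \operatorname{Re}(z)$ (or $\cos\theta$ in polar form) explicitly and note that the conclusion is about this component of the characteristic vector $\vec\Phi$, not about $\varphi$ as the constant $e^{\pm jt}$ characteristic of Example~\ref{def:char}. I also need to handle the degenerate classes on the imaginary axis (real part $0$) separately, since there a vector and its reflection \emph{coincide} as a class rather than forming a genuine mirror pair; stating ``differ only in terms of their sign'' is then vacuously true because the sign-flip of $0$ is $0$. Once those two cases are separated, each is a one-line computation from the definitions.
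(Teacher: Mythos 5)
Your proposal is correct in substance but takes a genuinely more explicit route than the paper. The paper's own proof is a two-step citation chain: it invokes the Stein--Weiss group result to say the vector field is a group, invokes Lemma~\ref{lemma:symmetricFields} to say the Euler field $\vec{V}(t)=e^{jt}$, $t\in[-k,+k]$, is symmetric about the origin, and then simply asserts ``hence $\vec{V}(t)$ contains equivalence classes that differ only in terms of their sign,'' with no identification of the characteristic involved and no computation. You instead supply the missing mechanism: you fix the coordinate characteristic $\varphi(z)=\operatorname{Re}(z)$ (equivalently $\cos\theta\in[-1,1]$, consistent with Def.~\ref{def:charFn}), compute that the antipodal map $z\mapsto -z$ (i.e.\ $\theta\mapsto\theta+\pi$) negates this characteristic, and then use $d^{\Phi}$ and the equivalence relation of Def.~\ref{def:cNAS} to show the involution $x\mapsto -x$ pairs the classes into mirror classes with negated real parts, treating the imaginary-axis (real part $0$) case separately. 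What your version buys is that the lemma's conclusion actually follows from stated definitions rather than being asserted; what the paper's version buys is only brevity, at the cost of leaving ``which characteristic'' and ``why sign-flip'' implicit. Two small points to reconcile with the paper if you write this up: Def.~\ref{def:dnear0} reduces the difference $\varphi(a)-\varphi(b)$ mod $2$, so your difference $2\operatorname{Re}(\vec z)$ should be read modulo $2$ (harmless, but worth a sentence); and the set $\{e^{jt}:t\in[-k,+k]\}$ by itself is symmetric under conjugation rather than under $z\mapsto -z$, so your hypothesis of antipodal symmetry really relies on the $\pm$ sign in the fields of Example~\ref{ex:2vectorFields} (or on the additive group of Theorem~1 with $\vec{V}f(\pm z)=\pm z$), which is the honest reading of ``symmetric about the origin'' and is the same gap the paper itself glosses over in Lemma~\ref{lemma:symmetricFields}.
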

\begin{proof}
From Theorem~\ref{lemma:charGroup}, a vector field in the complex plane is a group.
From Lemma~\ref{lemma:symmetricFields}, an Euler vector field
$\vec{V}(t) = e^{jt}$ defined on $t\in [-k,+k]$ in $\mathbb{C}$ is symmetric about the origin in $\mathbb{C}$. 
Hence, $\vec{V}(t)$ contains equivalence classes that differ only in terms of their sign.
\end{proof}

\begin{theorem}\label{theorem:symmetricVfgroups}{\bf (Characteristically Near Stein-Weiss Multiplicative Groups).}\\
Every pair of Stein-Weiss multiplicative groups symmetric about the origin are characteristically near groups.
\end{theorem}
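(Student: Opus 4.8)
The plan is to exploit the fact that every Stein-Weiss multiplicative group, being a group, necessarily contains the multiplicative identity, and then to show that this shared identity forces the characteristic distance between any two such groups to vanish. First I would fix two Stein-Weiss multiplicative groups $G_1 = (\{e^{\pm jt}: t\in[-k_1,+k_1]\},\cdot)$ and $G_2 = (\{e^{\pm jt}: t\in[-k_2,+k_2]\},\cdot)$, each symmetric about the origin in $\mathbb{C}$ in the sense of Lemma~\ref{lemma:symmetricFields}. By the Stein-Weiss Multiplicative Vector Field Group lemma, each of $G_1,G_2$ contains $e^0 = \vec{1}$ as its multiplicative identity; moreover, by Lemma~\ref{lemma:groupClasses}, the equivalence classes of a symmetric vector field group are organized about the origin, so that the class $[\vec{1}]$ is common to both $G_1$ and $G_2$ rather than accidental.

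Next I would pass to characteristics. Writing $\varphi$ for the Stein-Weiss characteristic of Example~\ref{def:char} (equivalently, any characteristic mapping admitted by the cNAS framework of Def.~\ref{def:cNAS}), the key observation is that $\varphi(\vec{1})\in\Phi(G_1)$ and $\varphi(\vec{1})\in\Phi(G_2)$ are the same complex number, so the pair $(a,b) = (\vec{1},\vec{1})\in G_1\times G_2$ satisfies $|\varphi(a)-\varphi(b)| = 0$, and hence $(\varphi(a)-\varphi(b))\bmod 2 = 0$ as well. Since the characteristic distance $d^{\Phi}(G_1,G_2)$ of Def.~\ref{def:dnear0} is the infimum over all such pairs of a quantity valued in $[-1,1]$, and this infimum is attained and equals $0$ at $(\vec{1},\vec{1})$, we conclude $d^{\Phi}(G_1,G_2) = 0$.

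Finally I would close by invoking the Fundamental Theorem of Near Systems, Theorem~\ref{theorem:nearness}: the existence of $a\in G_1$, $b\in G_2$ with $|\varphi(a)-\varphi(b)| = 0$ is equivalent to $G_1\ \dnear\ G_2$, which by Definition~\ref{def:nearness} is exactly the statement that $G_1$ and $G_2$ are characteristically near. Since $G_1,G_2$ were an arbitrary pair of Stein-Weiss multiplicative groups symmetric about the origin, the theorem follows.

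The main obstacle I anticipate is not the logical skeleton, which is short, but pinning down which characteristic $\varphi$ is intended, since the paper offers both the Stein-Weiss characteristic $\varphi(t) = e^{\pm jt}$ and the $[-1,1]$-valued characteristic of Def.~\ref{def:charFn}; the argument should be phrased so that it works for the shared-identity element under either reading, and one should verify that the ``$\bmod\,2$'' in Def.~\ref{def:dnear0} does not disturb the vanishing of the distance at the matching pair. A secondary subtlety is making the role of the symmetry hypothesis explicit: it is precisely what lets Lemma~\ref{lemma:groupClasses} guarantee a common equivalence-class structure about the origin, so that the matching of characteristics at $\vec{1}$ reflects shared group structure rather than coincidence.
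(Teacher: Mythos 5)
Your proposal is correct and follows essentially the same route as the paper's own proof: invoke Lemma~\ref{lemma:charGroup}, Lemma~\ref{lemma:symmetricFields} and Lemma~\ref{lemma:groupClasses} to obtain a common element (and class) of the two symmetric groups, conclude $d^{\Phi}(G_1,G_2)=0$ from the matching characteristics, and finish with Theorem~\ref{theorem:nearness}. The only difference is cosmetic but welcome: you pin down the shared witness concretely as the identity $e^{0}=\vec{1}$, where the paper argues more loosely from ``identical elements close to the origin,'' and you omit the paper's closing appeal to Theorem~\ref{lemma:nearLowerApprox} on lower approximations, which is not needed for the stated claim.
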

\begin{proof}
From Lemma~\ref{lemma:charGroup}, 
\boxed{G_1(\left\{e^{jk_1\theta}\right\},\cdot),
       G_2(\left\{e^{jk_2\theta}\right\},\cdot)}
is a pair of Stein-Weiss multiplicative groups.  From Lemma~\ref{lemma:symmetricFields}, groups $G_1,G_2$ are symmetric about the origin.  And from Lemma~\ref{lemma:groupClasses}, groups $G_1,G_2$ contain identical classes stemming from the fact that each of the Stein-Weiss $G_1,G_2$ groups contain identical elements close to the origin in $\mathbb{C}$.  From what we have observed, we can define a pair of characteristic functions, namely,
\begin{center}
\boxed{\boldsymbol{\
\varphi(G_1) = [x],x\in\mathbb{C},\varphi(G_2) = [y],y\in\mathbb{C},\abs{\varphi(G_1)-\varphi(G_2)} = 0.
}}
\end{center}

\noindent From Theorem~\ref{theorem:nearness}, 
\boxed{\vec{\Phi}(G_1)\ \rnear\ \vec{Phi}(G_2)}.  Hence, from Lemma~\ref{lemma:nearLowerApprox}, we have a pair of neighborhoods
\boxed{N(\vec{\Phi}(G_1))\ \dnear\ N(\vec{\Phi}(G_2))\Leftrightarrow
N_*(G_1)X\ \dnear\ N_*(G_1)X}.
\end{proof}

\vspace*{0.2cm}

\begin{theorem}\label{theorem:symmetricVfgroups2}{\bf (Characteristically Near  Stein-Weiss Additive Groups).}\\
Every pair of Stein-Weiss additive groups $G_1(\left\{e^{jk_1\theta}\right\},+),
       G_2(\left\{e^{jk_2\theta}\right\},+)$ symmetric about the origin are characteristically near groups.
\end{theorem}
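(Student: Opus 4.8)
The plan is to mirror the proof of Theorem~\ref{theorem:symmetricVfgroups}, substituting the additive group structure of Theorem~\ref{lemma:charGroup} (Stein-Weiss Additive Vector Field Group) for the multiplicative one. First I would invoke that theorem to confirm that $G_1(\left\{e^{jk_1\theta}\right\},+)$ and $G_2(\left\{e^{jk_2\theta}\right\},+)$ are genuine abelian vector field groups under $+$: the abelian property is immediate from commutativity of addition in $\mathbb{C}$, i.e., $e^{jt}+e^{jt'}=e^{jt'}+e^{jt}$, with the zero vector $\vec{0}=0+0j$ as the additive identity and $-\vec{z}$ as the inverse of $\vec{z}$.

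Next, because each field $\vec{V}(t)=e^{jk_i\theta}$ is taken over a symmetric parameter interval $\theta\in[-k,+k]$, Lemma~\ref{lemma:symmetricFields} gives that $G_1$ and $G_2$ are both symmetric about the origin in $\mathbb{C}$, and then Lemma~\ref{lemma:groupClasses} yields that each group is partitioned into equivalence classes whose real parts differ only in sign. In particular $G_1$ and $G_2$ contain identical classes of elements near $\vec{0}$, so I can define characteristic functions $\varphi(G_1)=[x]$ and $\varphi(G_2)=[y]$ with $x,y\in\mathbb{C}$ and $\abs{\varphi(G_1)-\varphi(G_2)}=0$. This step is unchanged from the multiplicative case, since the characteristic distance $d^{\Phi}$ in Def.~\ref{def:dnear0} depends only on the characteristics of the elements, not on which group operation is in force.

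Finally, Theorem~\ref{theorem:nearness} (Fundamental Theorem of Near Systems) converts the vanishing characteristic difference into $\vec{\Phi}(G_1)\ \rnear\ \vec{\Phi}(G_2)$, and Lemma~\ref{lemma:nearLowerApprox} lifts this to the neighborhood level, giving $N(\vec{\Phi}(G_1))\ \dnear\ N(\vec{\Phi}(G_2))\Leftrightarrow N_*(G_1)X\ \dnear\ N_*(G_2)X$, which is exactly the assertion that $G_1$ and $G_2$ are characteristically near groups.

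The only real obstacle is bookkeeping: one must verify that the symmetric-interval hypothesis is applied consistently to the additive operation (so that the description $\vec{V}f(\pm z)=\pm z$ from Theorem~\ref{lemma:charGroup} is reconciled with the exponential presentation $e^{jk_i\theta}$), and that the near-origin equivalence classes produced by Lemma~\ref{lemma:groupClasses} coincide for $G_1$ and $G_2$ irrespective of the scale factors $k_1,k_2$. Once that matching of classes is in hand, the remainder is a routine transcription of the multiplicative proof.
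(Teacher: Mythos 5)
Your proposal is correct and takes essentially the same route as the paper: the paper's own proof is the one-line remark that the argument is ``symmetric with'' the proof of Theorem~\ref{theorem:symmetricVfgroups}, and your write-up is precisely that transcription of the multiplicative argument to the additive setting (even flagging, more carefully than the paper does, the need to reconcile the presentation $\vec{V}f(\pm z)=\pm z$ with the exponential form $e^{jk_i\theta}$).
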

\begin{proof}
Symmetric with the proof of Theorem~\ref{theorem:symmetricVfgroups}.
\end{proof}

%

\section*{Acknowledgements}
Many thanks to the reviewers for their helpful comments.
We also extend our thanks to Tane Vergili for sharing here profound insights on near set theory.  In addition, we extend our thanks to Andrzej Skowron, Enze Cui, Divagar Vakeesan, Tharaka Perera, William Hankley, Brent Clark and Sheela Ramanna for sharing their insights concerning dynamical systems and to Deveci \"{O}m\"{u}r, Engin \"{O}zkan and Surabhi Tiwari (Fibonacci sequences and group theory). In some ways, this paper is a partial answer to the question 'How [temporally] Near?' put forward in 2002~\cite{PawlakPeters2002}.

This research has been supported by the Natural Sciences \&
Engineering Research Council of Canada (NSERC) discovery grant 185986 
and Instituto Nazionale di Alta Matematica (INdAM) Francesco Severi, Gruppo Nazionale per le Strutture Algebriche, Geometriche e Loro Applicazioni grant 9 920160 000362, n.prot U 2016/000036 and Scientific and Technological Research Council of Turkey (T\"{U}B\.{I}TAK) Scientific Human
Resources Development (BIDEB) under grant no: 2221-1059B211301223.
\vspace{0.2cm}

\vspace{0.2cm}

\bibliographystyle{amsplain}
\bibliography{NSrefs}
\end{document}